\documentclass[]{article}

\usepackage{PRIMEarxiv}

\usepackage[utf8]{inputenc} % allow utf-8 input
\usepackage[T1]{fontenc}    % use 8-bit T1 fonts
\usepackage{hyperref}       % hyperlinks
\usepackage{url}            % simple URL typesetting
\usepackage{booktabs}       % professional-quality tables
\usepackage{amsfonts}       % blackboard math symbols
\usepackage{nicefrac}       % compact symbols for 1/2, etc.
\usepackage{microtype}      % microtypography
\usepackage{lipsum}
\usepackage{fancyhdr}       % header
\usepackage{graphicx}       % graphics
\graphicspath{{media/}}     % organize your images and other figures under media/ folder
\usepackage{amsmath,amssymb,amsfonts,algpseudocode,algorithm,breqn,graphicx,caption,subcaption,multicol,color,amsthm}
%Header

\newtheorem{theorem}{Theorem}
\newtheorem{lemma}[theorem]{Lemma}

\pagestyle{fancy}
\thispagestyle{empty}
\rhead{ \textit{ }} 

% Update your Headers here
\fancyhead[LO]{A simple and efficient preprocessing step for convex hull problem}
%\fancyhead[RE]{Firstauthor and Secondauthor} % Firstauthor et al. if more than 2 - must use \documentclass[twoside]{article}

%% Title
\title{A simple and efficient preprocessing step for convex hull problem 
}

\author{
  Mohammad Heydari\\
  Department of Computer Science, Khansar Campus \\
  University of Isfahan \\
  Isfahan, Iran\\   
  \texttt{m.heydari@khc.ui.ac.ir}\\
  %% examples of more authors
   \And
  Ashkan Khalifeh \\
  Department of Statistics \\
  Yazd University \\
  Yazd, Iran\\
  \texttt{khalifeh68@yahoo.com} \\
  %% \AND
  %% Coauthor \\
  %% Affiliation \\
  %% Address \\
  %% \texttt{email} \\
  %% \And
  %% Coauthor \\
  %% Affiliation \\
  %% Address \\
  %% \texttt{email} \\
  %% \And
  %% Coauthor \\
  %% Affiliation \\
  %% Address \\
  %% \texttt{email} \\
}

\begin{document}
\maketitle

\begin{abstract}
The present paper is concerned with a recursive algorithm as a preprocessing step to find the convex hull of $n$ random points uniformly distributed in the plane. For such a set of points, it is shown that eliminating all but $O(\log n)$ of points can derive the same convex hull as the input set. Finally it will be shown that the running time of the algorithm is $O(n)$.
\end{abstract}

% keywords can be removed
\keywords{Algorithm \and Convexhull \and Recursive}

\section{Introduction}
For a given set of points $P$ in $\mathbb{R}^d$, the convex hull of $P$ is the smallest convex set containing $P$. Computing convex hull is a fundamental problem in many fields such as biology \cite{biology-app}, image processing \cite{image-processing,image-processing-2,graphics} and pattern recognition \cite{pattern-recognition2}. The problem of finding the convex hull of a set of $n$ points in $\mathbb{R}^2$ has extensively been studied and numerous elegant algorithms have been presented to address this problem in the literature. Among these algorithms, one can refer to Graham's Scan \cite{graham-scan} and Quickhull \cite{quickhull} that run in $O(n \log n)$ and Giftwrapping \cite{gift-wrapping} that runs in $O(nh)$, where $h$ is the number of points on the convex hull. An approach to address this problem includes a preprocessing step which involves finding and eliminating some points that are inside the convex hull. Then the rest of the points are fed into any convex hull algorithm. The Quickhull algorithm follows this approach. In addition, Akl and Toussaint \cite{toussaint} followed the point elimination approach to find the convex hull in worst-case running time $O(n \log n)$ and expected running time $O(n)$. However, no theoretical result in their work represents the order of eliminated points. An et al. \cite{modified-graham} and An \cite{optimization} presented algorithms that use the point elimination approach.

The convex hull of random points has a long history dating back to 1864 by Sylvester. The case in which the points are distributed by uniform distribution may be found in Bentley et al. \cite{bentley} and Golin and Sedgewick \cite{golin}, who specifically concentrated on the preprocessing step. The authors in \cite{bentley} and \cite{golin} concentrated on the point elimination approach for points that are uniformly distributed. Their preprocessing steps identify and eliminate all but $O(\sqrt{n})$ of the points. The results of \cite{bentley} are valid when there are more than 2,000,000 points and so the results are of theoretical interest. Golin and Sedgewick \cite{golin} studied the problem in unit square and eliminated $O(\sqrt{n})$ of points in $O(n)$ time. 

In the present paper, a new recursive algorithm will be considered as the point elimination preprocessing step for a set of $n$ points that are uniformly distributed in $\mathbb{R}^2$. The algorithm identifies and eliminates all but $O(\log n)$ of the points in $O(n)$ time and similar to previous algorithms, we may feed the remaining points to any convex hull algorithm. Most convex hull routines will run much more efficiently on few points than on many \cite{golin}, hence reducing the size of the input set is of practical interest. In addition, since the convex hull of the input set and the obtained $O(\log n)$ of points are identical, the preprocessing algorithm estimates an upper bound on the expected number of points located on the convex hull of the input set.

The rest of the paper is organized as follows: in Section 2, the preliminary lemmas are provided. Section 3 describes the preprocessing algorithm. Section 4 presents an analysis of the results and proves the correctness of the algorithm. Finally, Section 5 provides a concluding statement.

\section{Preliminaries}
This section provides the necessary definitions and lemmas used throughout the paper.

\textbf{Definition:} An extreme point among a set of points is a point whose $x$- or $y$-coordinates is minimum or maximum over the set, i.e., the topmost, leftmost, bottommost, and rightmost points.

In the following lemma, the bounding box of a set of uniformly distributed points in the plane is considered, and we investigate the distribution of points inside it. Formally speaking, a four-sided polygon formed by connecting a randomly selected point on each edge of the bounding box to the point on its adjacent edges is a quadrilateral. The formed right triangles inside the bounding box are called corners; see Figure \ref{fig:rec1}.

\begin{lemma} \label{bounding-box-n8}
	Consider a set of uniformly distributed points and its bounding box in the plane, as shown in  Figure \ref{fig:rec1}. Connecting the point on an edge of the bounding box to a point on its adjacent edges, will form a quadrilateral inside the bounding box. The expected number of points inside the quadrilateral and at each corner are $\dfrac{n}{2}$ and $\dfrac{n}{8}$, respectively.
	
	\begin{figure}
		\centering
		\includegraphics[width=0.8\linewidth]{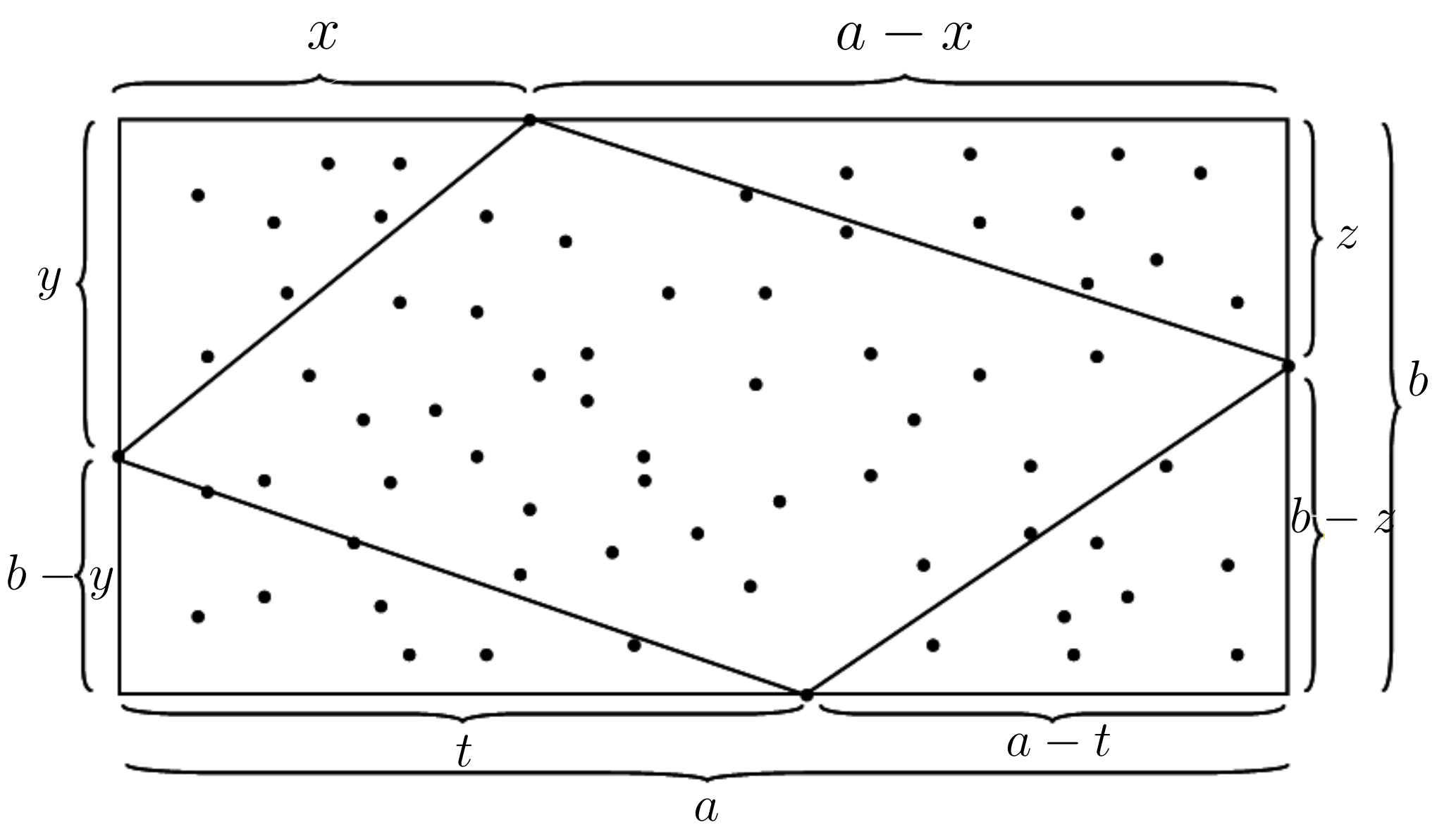}
		\caption{Illustration of quadrilateral made by connecting extreme points.}
		\label{fig:rec1}
	\end{figure}
\end{lemma}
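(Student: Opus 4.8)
The plan is to condition on the bounding box together with the four extreme points that define it, reduce the count of points in any region to a binomial expectation governed by the relative area of that region, and then evaluate the relevant area ratios in expectation using the uniform placement of the extreme points along the edges.

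First I would observe that, conditioning on the identities of the four extreme points and on the bounding box $[a,b]\times[c,d]$ they determine, the remaining $n-4$ points are independent and uniform on the box. This follows from the independence of the two coordinates under the uniform distribution: once we fix $\min_i X_i=a$, $\max_i X_i=b$, $\min_i Y_i=c$, $\max_i Y_i=d$ at distinct indices, every other point has $X\in(a,b)$ and $Y\in(c,d)$ and is otherwise unconstrained, hence uniform on the box. Consequently, for any sub-region $S$ of the box the number of interior points falling in $S$ is binomial with success probability $\operatorname{area}(S)/\operatorname{area}(\text{box})$, so its conditional expectation is $(n-4)\,\operatorname{area}(S)/\operatorname{area}(\text{box})$.

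Next I would compute the expected area ratio for a single corner, say the bottom-left right triangle with legs $x_B$ (horizontal distance of the bottommost point from the left edge) and $y_L$ (vertical distance of the leftmost point from the bottom edge). Writing $u=x_B/w$ and $v=y_L/h$ for the relative positions of these two extreme points along their edges, the ratio equals $\tfrac12 uv$. The key probabilistic step is to show $u$ and $v$ are independent and uniform on $[0,1]$: the $x$-coordinate of the bottommost point is an $X$-value independent of the $Y$-values that select that point, so given the box it is uniform on $(a,b)$, giving $u\sim\mathrm{Unif}(0,1)$; symmetrically $v\sim\mathrm{Unif}(0,1)$; and since $X_B$ and $Y_L$ are coordinates of distinct points with all coordinates mutually independent, $u\perp v$. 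Hence $\mathbb{E}[\tfrac12 uv]=\tfrac12\cdot\tfrac12\cdot\tfrac12=\tfrac18$, and the expected number of points in the corner is $(n-4)/8$, i.e. $n/8$ to leading order.

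Finally I would invoke symmetry: the same argument applied to each of the four corners—where the other legs are the complementary relative positions $1-u$, $1-v$, again uniform on $[0,1]$—gives expected area ratio $\tfrac18$ for every corner, so the four corners together cover half the box in expectation and the central quadrilateral covers the other half, yielding $n/2$ points inside it (directly, or by subtracting the four corner counts from the $n-4$ interior points). I expect the main obstacle to be the careful justification that the two legs of each corner correspond to independent uniform relative positions; the conditional-uniformity reduction and the area bookkeeping are routine once that independence is established, and the measure-zero event that two extreme points coincide is ignored throughout.
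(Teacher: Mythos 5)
Your proposal is correct and follows essentially the same route as the paper: treat the positions of the extreme points along the edges as independent uniforms, compute the expected area of each corner triangle as $\tfrac12 E[u]E[v]=\tfrac18$ of the box (hence $\tfrac12$ for the quadrilateral by complementation/symmetry), and convert area ratios into expected point counts. You are in fact somewhat more careful than the paper, which simply asserts the area-to-count conversion ("since the points are uniformly distributed") rather than conditioning on the extreme points to obtain a binomial count, and which postulates rather than derives the uniformity of the edge positions.
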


\begin{proof}
	One point is randomly selected on each edge of the bounding box. Let us define $X, T, Y,$ and $Z$ with observed values $x, t, y, $ and $z$, respectively, as the random variables for the selected points. Therefore, $X$ and $T$ follow uniform distribution on the interval $[0, a]$ denoted by $U(0 ,a)$, and $Y$ and $Z$ follow  $U(0,b)$; see Figure \ref{bounding-box-n8}. Given the fact that each vertex of the quadrilateral is a random variable, the area of the quadrilateral is a random variable and is defined as follows
	$$
	N = ab-\dfrac{1}{2}[XY+(a-X)Z+(b-Y)T+(a-T)(b-Z) \big].
	$$
Hence, the mathematical expectation of the area of the quadrilateral is given by

	\begin{dmath*}
		E(N) =  ab \bigg(1 - \dfrac{1}{2ab}\big[E(X)E(Y) + aE(Z)
		- E(X)E(Z) + bE(T) - E(Y)E(T) +ab - aE(Z) - bE(T) + E(T)E(Z)\big]\bigg).
	\end{dmath*}
Since the mathematical expectation of a uniform random variable on an arbitrary interval $[c_1,c_2]$ is $\dfrac{c_1+c_2}{2}$, $E(X)$ and $E(T)$ are $\dfrac{a}{2}$. Similarly, $E(Y)$ and $E(Z)$ are $\dfrac{b}{2}$. Consequently, $E(N)$ is given by
	\begin{dmath*}
		E(N) =  ab \bigg(1 - \dfrac{1}{2ab}\big[\dfrac{ab}{4} + \dfrac{ab}{2}-\dfrac{ab}{4}+\dfrac{ab}{2}-\dfrac{ab}{4}+ab-\dfrac{ab}{2}-\dfrac{ab}{2}+\dfrac{ab}{4}\big] \bigg) = ab \bigg(1 - \dfrac{ab}{2ab} \bigg) = \dfrac{1}{2}ab.
	\end{dmath*}
	Therefore, the expected value of the area of $N$ is $\dfrac{ab}{2},$ where $ab$ is the area of the bounding box. Since the expected value of the area of $N$ is half of the area of the bounding box, and the points are uniformly distributed, the expected value of the number of points in $N$ is half of the number of points in the bounding box. 
	%This argument is proven as follows. The number of points inside the quadrilateral is a binomial random variable, say, $L  \sim Binomial(n, P)$. The mathematical expectation of its probability parameter, i.e. $E(P)$ is $0.5$, which is indeed the ratio of the mathematical expectation of two areas $(E(P)=\dfrac{E(A)}{ab})$. So, the expected value of the number of points inside the quadrilateral is given by
%$$E(L) = E(E(L|P)) = E(nP) = nE(P) = \dfrac{n}{2}.$$
It remains to prove the expected value of the area of each corner. Let $Q$ be the random variable of area of the top left corner. Then we see that
	$$
	Q = \dfrac{1}{2}XY.
	$$
Given the fact that $X$ and $Y$ are independent random variables, one can easily show that
	$$
	E(Q) = \dfrac{1}{2} E(X)E(Y) = \dfrac{1}{2}\times \dfrac{a}{2}\times\dfrac{b}{2} = \dfrac{ab}{8}.
	$$
	By the same argument, it is easy to see that the expected value of the number of points for each corner is $\dfrac{n}{8}$ and it completes the proof.
\end{proof}

In the following lemma, we investigate the distribution of points in a right triangle. We assumed there is at least one point on each leg because otherwise, we can find the extreme points and draw new legs that pass through the extreme points; see Figure \ref{fig:tr2}.

\begin{lemma} \label{right-triangle-distribution}
	Suppose we are given a right triangle with $n \geq 4$ uniformly distributed points inside it. Let there be at least one point on each of its legs. If we randomly take one point on each leg and connect them, then the expected value of the number of points inside the inner triangle is $\dfrac{n}{4}$.
\end{lemma}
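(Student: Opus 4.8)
The plan is to mirror the corner computation from the proof of Lemma \ref{bounding-box-n8}, since the inner triangle here is geometrically the analogue of a corner. First I would place the right angle at the origin, with the horizontal leg of length $a$ along the $x$-axis and the vertical leg of length $b$ along the $y$-axis, so that the hypotenuse joins $(a,0)$ to $(0,b)$ and the whole triangle has area $\frac{1}{2}ab$.

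Next, I would model the two chosen points as $(X,0)$ and $(0,Y)$ with $X \sim U(0,a)$ and $Y \sim U(0,b)$ independent, exactly as the points on the edges were modelled in Lemma \ref{bounding-box-n8}. The segment joining these two points cuts off, together with the portions of the two legs running back to the right-angle vertex, a smaller right triangle, namely the inner triangle, whose area is the random variable $A = \frac{1}{2}XY$.

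The core computation is then immediate: by independence, $E(A) = \frac{1}{2}E(X)E(Y) = \frac{1}{2}\cdot\frac{a}{2}\cdot\frac{b}{2} = \frac{ab}{8}$. Dividing by the area $\frac{1}{2}ab$ of the full triangle gives an expected area ratio of $\frac{1}{4}$. Since the $n$ points are uniformly distributed, the expected fraction of them landing inside any subregion equals the expected area ratio of that subregion, so the expected number of points inside the inner triangle is $\frac{n}{4}$.

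The main obstacle, and the step that needs the most care, is justifying that the point selected on each leg is effectively uniform along that leg and that the two selections are independent, which is precisely what lets the expected-area argument go through; this is the same modelling assumption used for the bounding-box edges in Lemma \ref{bounding-box-n8}. A secondary point to pin down is the passage from \emph{expected area ratio} to \emph{expected point count}, which rests on linearity of expectation applied to the indicators that each point lies in the inner triangle, together with the uniformity of the $n$ points; the hypothesis $n \geq 4$ and the assumption of at least one point on each leg ensure the construction is well defined, so that a point is genuinely available on each leg to be connected.
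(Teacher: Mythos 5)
Your proposal is correct and follows essentially the same route as the paper: model the two selected points as independent uniform random variables $X \sim U(0,a)$ and $Y \sim U(0,b)$ on the legs, compute the expected area of the inner triangle as $E\left(\tfrac{1}{2}XY\right) = \tfrac{ab}{8}$, which is one quarter of the triangle's area $\tfrac{1}{2}ab$, and conclude the expected point count is $\tfrac{n}{4}$. Your added remarks on passing from expected area ratio to expected point count via indicator variables make explicit a step the paper leaves implicit, but the argument is the same.
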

\begin{proof}
	Consider Figure \ref{fig:triangle}. Let $X_1$ and $ Y_1$ be the following uniform random variables
	\begin{align*}
	X_1 & \sim U(O,a),\\
	Y_1 & \sim U(O,b)
	\end{align*}
	for the selected points. Let us denote the area of a triangle $ABC$ as $S(ABC)$ and the mathematical expectation of $S(OX_1Y_1)$ as $H$. Then one can easily show that
	\begin{equation*}
		E(H)= E(\dfrac{1}{2}Y_1X_1) = \dfrac{1}{2}E(Y_1)E(X_1) = \dfrac{1}{2} \times \dfrac{b}{2} \times \dfrac{a}{2} = \dfrac{1}{8}ab = \dfrac{1}{4}S(Oab).
	\end{equation*}
	Since the expected value of the area of triangle $OY_1X_1$ is a quarter of the area of triangle $Oab$, the expected value of the number of points in triangle $OY_1X_1$ is $\dfrac{n}{4}$.
\end{proof}
	
	\begin{figure}
		\centering
		\begin{subfigure}{.5\textwidth}
			\centering
			\includegraphics[width=.5\linewidth]{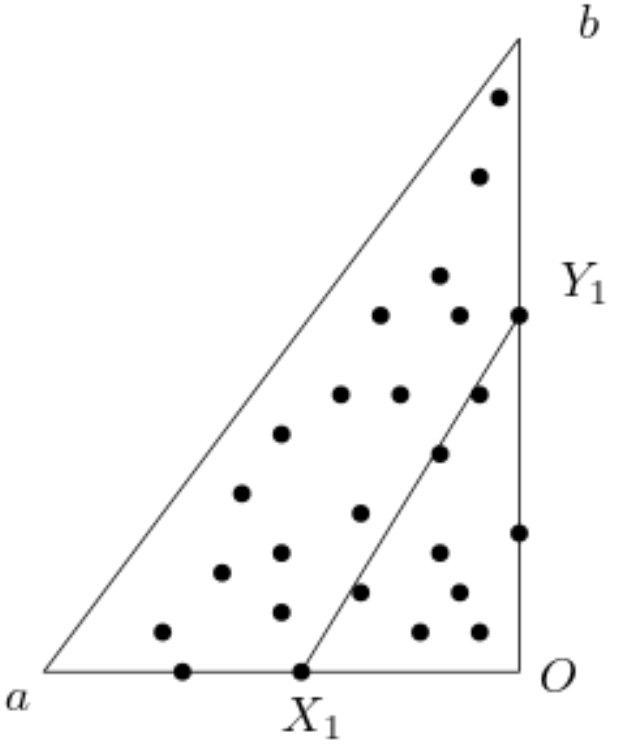}
			\caption{Distribution of points in triangle.}
			\label{fig:triangle}
		\end{subfigure}%
		\begin{subfigure}{.57\textwidth}
			\centering
			\includegraphics[width=.4\linewidth]{2-b.pdf}
			\caption{Illustration of right triangle made by new legs.}
			\label{fig:tr2}
		\end{subfigure}
		\caption{Right triangles.}
	\end{figure}

In the next section, the proposed algorithm is described and then its correctness and running time are proven.

\section{Algorithm}
In this section, the preprocessing algorithm is discussed. The algorithm takes a set $S$ of points as input and returns a set $A$ of points such that the covexhull of set $S$ and set $A$ will be identical. Throughout the paper, for a given set of points, the obtained extreme points are named in clockwise order as $p_t,p_r,p_b,p_l$ such that $p_t$ has the maximum value for the $y$-coordinate. In case there is more than one point whose $x$- or $y$-coordinate is minimum or maximum, we take one of them.

The proposed algorithm on a subset of points recursively selects and stores some extreme points in $A$. The extreme points in $i$th recursion are denoted by $p_t^{(i)},p_r^{(i)},p_b^{(i)},p_l^{(i)}$. As shown in Figure \ref{fig:algorithm}, the algorithm starts with finding the extreme points and stores $p_t^{(1)},p_r^{(1)},p_b^{(1)},p_l^{(1)}$ in the set $A$. Then recursively for the top right corner, it finds the extreme points and stores the $p_t^{(2)},p_r^{(2)}$ in $A$. For the bottom right, bottom left and top left corners it stores $p_r^{(2)},p_b^{(2)}$ and $p_b^{(2)},p_l^{(2)}$ and $p_l^{(2)},p_t^{(2)}$ in $A$, respectively. The algorithm repeats this procedure for the points above the segment lines $p_t^{(2)},p_r^{(2)}$ in the top right corner, and $p_t^{(2)},p_l^{(2)}$ in the top left corner, and the points below the segment lines $p_r^{(2)},p_b^{(2)}$ in the bottom right corner and $p_b^{(2)},p_l^{(2)}$ in the bottom left corner, until there are at most two points remaining.

The pseudo-code of the algorithm described above is given in Algorithm \ref{rh}. The strings TR, TL, BR and BL represent the top right, top left, bottom right, and bottom left right triangles of Figure \ref{fig:rec1}, respectively.

\begin{figure}
	\centering
	\includegraphics[width=0.6\linewidth]{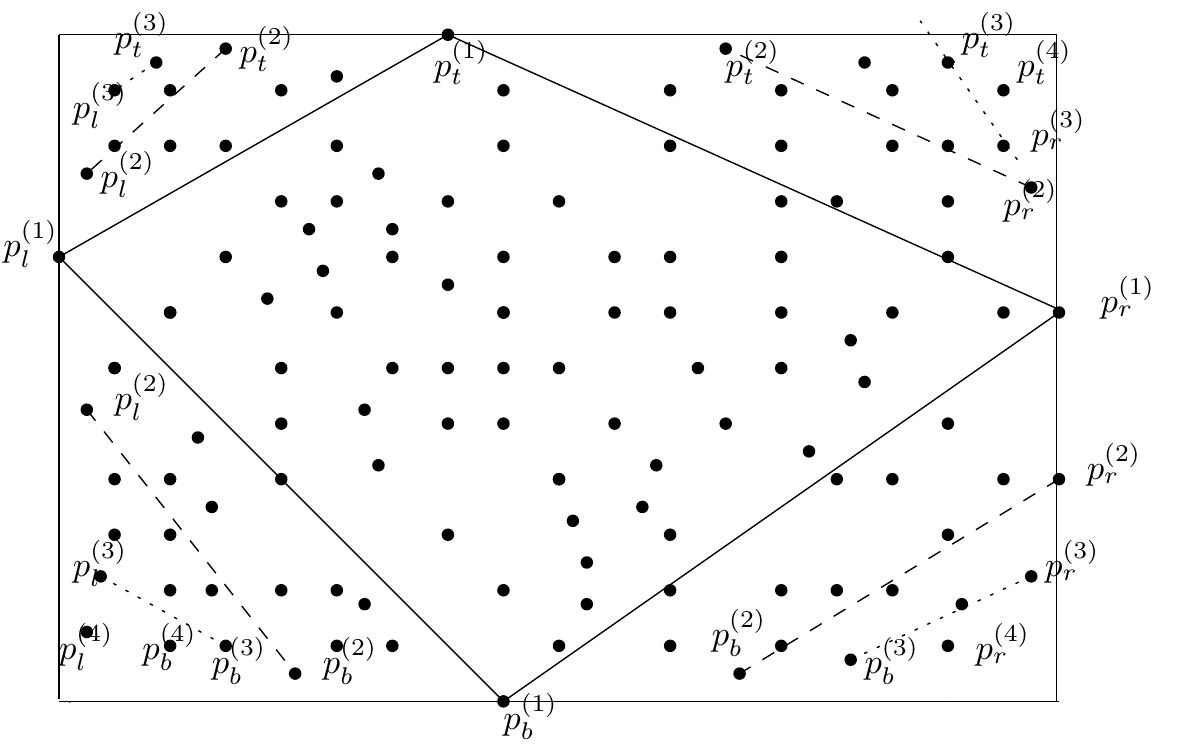}
	\caption{Illustration of execution of the algorithm.}
	\label{fig:algorithm}
\end{figure}

\begin{algorithm}\label{alg1}
	\caption{Preprocessing Algorithm}\label{rh}
	\textbf{Input: } Set $S$ of $n$ points\\
	\textbf{Output: } Set $A$ of points
	\begin{multicols}{2}
	\begin{algorithmic}[1]
		\State \textbf{IF} $|S| <= 3$ \textbf{ THEN }		
		\State \hspace{10pt} \textbf{ RETURN } $S$
		\State Find extreme points $p_t,p_r,p_b,p_l$.
		\State Let
		\begin{align*}
		S_1 &: \text{points above line segment } p_t p_r \\
	 	S_2 &: \text{points below line segment } p_r p_b\\		
		S_3 &: \text{points below line segment } p_b p_l\\ 
		S_4 &: \text{points above line segment } p_l p_t		
		\end{align*}
		\State $A = \{p_t,p_r,p_b,p_l\}$
		\State PointSelection($S_1,"TR"$)
		\State PointSelection($S_2,"BR"$)
		\State PointSelection($S_3,"BL"$)
		\State PointSelection($S_4,"TL"$)		
		\Procedure{PointSelection}{$S_r, corner$}
		\State \textbf{IF} $|S_r| <= 2$ \textbf{ THEN }
		\State \hspace{10pt} $A = A \cup S_r$
		\State \hspace{10pt} \textbf{ RETURN }	
		\State Find extreme points $p_t$,$p_r$,$p_b$,$p_l$.
		\State On point set $S_r$, let
		\begin{align*}
		S_1 &: \text{points above line segment } p_t p_r \\
	 	S_2 &: \text{points below line segment } p_r p_b\\		
		S_3 &: \text{points below line segment } p_b p_l\\ 
		S_4 &: \text{points above line segment } p_l p_t		
		\end{align*}					
		\State \textbf{IF} corner == "TR" \textbf{ THEN }
		\State \hspace{20pt} $A = A \cup \{p_t,p_r\}$
		\State \hspace{20pt} PointSelection($S_1,"TR"$)
		\State \textbf{IF} corner == "BR" \textbf{ THEN }
		\State \hspace{20pt} $A = A \cup \{p_r,p_b\}$
		\State \hspace{20pt} PointSelection($S_2,"BR"$)		
		\State \textbf{IF} corner == "BL" \textbf{ THEN }			
		\State \hspace{20pt} $A = A \cup  \{p_b,p_l\}$
		\State \hspace{20pt} PointSelection($S_3,"BL"$)		
		\State \textbf{IF} corner == "TL" \textbf{ THEN }		
		\State \hspace{20pt} $A = A \cup  \{p_l,p_t\}$								
		\State \hspace{20pt} PointSelection($S_4,"TL"$)		
		\EndProcedure

	\end{algorithmic}
	\end{multicols}
\end{algorithm}

\section{Correctness and Analysis}
In this section the correctness of the algorithm is proven in Theorem \ref{correctness}. The expected value of the number of points stored by the algorithm is estimated in Theorem \ref{estimate}. In Theorem \ref{running-time}, we will prove that the running time of the algorithm is $O(n)$.

\begin{theorem}\label{correctness}
	The set $A$ contains every point on the convex hull of the input set.
\end{theorem}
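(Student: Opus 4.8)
The plan is to prove that no vertex of $\mathrm{conv}(S)$ is ever discarded; since $A\subseteq S$ gives $\mathrm{conv}(A)\subseteq\mathrm{conv}(S)$ for free, this yields $\mathrm{conv}(A)=\mathrm{conv}(S)$, which is the content of the statement. First I would record the two facts that drive everything. Each extreme point $p_t,p_r,p_b,p_l$ is a vertex of $\mathrm{conv}(S)$, being a point of extremal $y$- or $x$-coordinate. Moreover, the four extreme points partition the bounding box into the central quadrilateral $p_tp_rp_bp_l$ and the four corner triangles, and any point lying in the closed central quadrilateral is a convex combination of $p_t,p_r,p_b,p_l$, hence is not a hull vertex. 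Consequently every vertex of $\mathrm{conv}(S)$ is either one of the four extreme points (stored in line~5) or lies strictly inside one of the four corners, i.e.\ on the corresponding hull chain joining two consecutive extreme points. It therefore suffices to treat a single corner, say the top-right, and to show that PointSelection, invoked on $S_1$ with corner TR, stores in $A$ every vertex of $\mathrm{conv}(S)$ lying strictly above the chord $p_tp_r$.

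The core is an induction on $|T|$ over the recursive calls to PointSelection on a set $T$ with corner TR, with invariant: every vertex of $\mathrm{conv}(S)$ lying in $T$ is eventually added to $A$. The base case $|T|\le 2$ is immediate, since all of $T$ is then added to $A$ and the call returns. For the inductive step, let $p_t',p_r'$ be the topmost and rightmost points of $T$ (both added to $A$), let $T'$ be the recursed set of points of $T$ strictly above the chord $p_t'p_r'$, and let $v\in T$ be a vertex of $\mathrm{conv}(S)$ with $v\notin\{p_t',p_r'\}$. The key claim is that $v$ lies strictly above $p_t'p_r'$, so that $v\in T'$; since $p_t',p_r'\notin T'$ we have $|T'|<|T|$, and the induction hypothesis applied to the recursive call on $T'$ completes the step.

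I expect this key claim to be the main obstacle, because $p_t'$ and $p_r'$ need not themselves be vertices of $\mathrm{conv}(S)$, so one cannot argue that hull vertices are simply preserved from one call to the next; instead one must combine two facts about $v$. Since $p_t'$ is topmost and $p_r'$ is rightmost of $T$ and $v\in T$, the quantities $r=(p_t')_y-v_y$ and $q=(p_r')_x-v_x$ are nonnegative. Since $v$ is a vertex of $\mathrm{conv}(S)$ interior to the top-right chain (in particular $v\neq p_t,p_r$), it admits a supporting line whose outward normal $\mathbf n=(n_1,n_2)$ lies in the open first quadrant, and as $p_t',p_r'\in S$ we get the strict inequalities $\mathbf n\cdot v>\mathbf n\cdot p_t'$ and $\mathbf n\cdot v>\mathbf n\cdot p_r'$. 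Writing also $p=v_x-(p_t')_x$ and $s=v_y-(p_r')_y$, these two inequalities read $n_1p>n_2r$ and $n_2s>n_1q$; their right-hand sides are nonnegative, forcing $p,s>0$, and multiplying them gives $n_1n_2\,ps>n_1n_2\,qr$, i.e.\ $ps>qr$. A direct computation shows that $ps>qr$ is exactly the condition that $v$ lie strictly above the chord $p_t'p_r'$, which proves the claim. Verifying this last equivalence is the one calculation I would carry out in full.

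Finally I would address termination and the remaining corners. Each recursive call strictly shrinks the region, since $T'$ omits at least $p_t'$ and $p_r'$, so the recursion reaches the base case. The bottom-right, bottom-left, and top-left corners are handled identically after the obvious reflection of coordinates, replacing ``topmost/rightmost'' by the corner's stored pair and the first quadrant by the relevant quadrant of outward normals. Together with the extreme points and the central quadrilateral, this shows $A$ contains every vertex of $\mathrm{conv}(S)$, hence every point on the convex hull of the input set. Throughout I would assume general position, or invoke the tie-breaking rule of Section~3, so that the strict inequalities above are valid.
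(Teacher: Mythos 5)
Your proposal is correct, but it takes a genuinely different route from the paper. The paper's proof is a two-sentence argument of the ``discard'' type: at the $i$th recursion the points falling below the chord $p_t^{(i)}p_r^{(i)}$ lie inside the quadrilateral $p_t^{(i-1)}p_t^{(i)}p_r^{(i)}p_r^{(i-1)}$, hence inside the hull of retained points, and therefore cannot be hull vertices. You instead prove the contrapositive ``survival'' statement: every vertex of $\mathrm{conv}(S)$ on the open top-right chain passes from each recursive set $T$ into the next set $T'$, via a supporting-line argument with outward normal in the open first quadrant and the cross-product identity $(p_r'-p_t')\times(v-p_t')=ps-qr$. Your version is more work but buys real rigor at exactly the point where the paper is thinnest: you explicitly confront the fact that $p_t^{(i)},p_r^{(i)}$ for $i\ge 2$ are extreme only within the current subset and need not be vertices of $\mathrm{conv}(S)$, so one cannot simply say ``extreme points are hull vertices and are kept''; the paper's quadrilateral claim silently absorbs this issue and is asserted without verification that the discarded points actually lie in that quadrilateral (e.g.\ that they do not escape across the side $p_t^{(i-1)}p_t^{(i)}$). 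Your derivation that $p>0$ and $q\ge 0$ also settles, as a byproduct, the ambiguity between ``above the line'' and ``above the segment'' in the algorithm's definition of $S_1$, which the paper never addresses. The only caveats are the ones you already flag: the strict inequalities require choosing $\mathbf n$ in the interior of the normal cone of $v$ (available since $v$ is a vertex) and a general-position or tie-breaking assumption; with those stated, your argument is complete and strictly stronger than the paper's.
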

\begin{proof}
	In the first step, the algorithm stores extreme points of the input set. We prove the theorem for the top right corner; it can easily be generalized to other corners in a similarly.  At $i$th recursion, for $i \ge 2$, the points below the segment line $p_t^{(i)}p_r^{(i)}$ are inside the quadrilateral $p_t^{(i-1)}p_t^{(i)}p_r^{(i)}p_r^{(i-1)}$ and they are removed from further computations. Since they are inside a quadrilateral, they cannot be present on the convex hull of $S$. Therefore, set $A$ contains all points on the convex hull of $S$.
\end{proof}

In Theorem \ref{estimate}, we prove that the expected size of set $A$ is $O(\log n)$.  By Theorem \ref{correctness}, since set $A$ contains every point on the convex hull of the input set, it can be concluded retaining only $O(\log n)$ of points can derive the same convex hull as the input set.
\begin{theorem}\label{estimate}
	The expected size of set $A$ is $O(\log n)$.
\end{theorem}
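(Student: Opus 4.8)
The plan is to show that the whole set $A$ is assembled from four \emph{recursion chains} --- one per corner --- and that each chain has expected length $O(\log n)$. First I would observe that each invocation of \textsc{PointSelection} for a fixed corner adds only a constant number of points (two) to $A$ and then calls itself \emph{once}, on a single sub-region; there is no branching. Hence the total contribution of one corner to $|A|$ is proportional to the depth $D$ of its recursion chain, and by linearity of expectation $E[|A|]=4+\sum_{\text{4 corners}}O\!\left(E[D]\right)$, so it suffices to prove $E[D]=O(\log n)$ for a single corner, say the top-right.

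Next I would control how the number of active points shrinks from one level to the next. Let $N_k$ denote the number of points in the region processed at depth $k$ of the chain, so that $E[N_0]=n/8$ by Lemma \ref{bounding-box-n8}. The region passed to the next level is the sub-triangle lying above the segment $p_t p_r$, and Lemma \ref{right-triangle-distribution} shows that a triangle built in this way retains, in expectation, the fraction $\tfrac14$ of the points of its parent. Because a uniform sample restricted to any measurable sub-region is again uniform on that sub-region (conditioned on the count it contains), Lemma \ref{right-triangle-distribution} may be re-applied at \emph{every} level. This yields a constant $c<1$ with $E[N_{k+1}\mid\mathcal F_k]\le c\,N_k$, where $\mathcal F_k$ records the first $k$ levels, and therefore $E[N_k]\le c^{k}E[N_0]\le c^{k}n$.

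Finally I would convert this geometric decay of the expected count into a bound on the expected depth. The chain continues past level $k$ only while at least three points remain, so $P(D>k)\le P(N_k\ge 3)\le\tfrac13 E[N_k]\le\tfrac{n}{3}c^{k}$ by Markov's inequality, while trivially $P(D>k)\le 1$. Splitting $E[D]=\sum_{k\ge 0}P(D>k)$ at $k^{\ast}=\log_{1/c}n$, bounding the first $k^{\ast}$ terms by $1$ and the remaining terms by the geometric tail $\sum_{k>k^{\ast}}\tfrac{n}{3}c^{k}=O(1)$ (using $c^{k^{\ast}}=1/n$), gives $E[D]=O(\log n)$. Summing over the four corners then yields $E[|A|]=O(\log n)$.

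The main obstacle is the step that feeds Lemma \ref{right-triangle-distribution} at each level: that lemma is stated for a point chosen \emph{uniformly at random} on each leg, whereas the algorithm cuts along the segment joining the two \emph{extreme} points of the current region. I would therefore need to argue that replacing the random cut points by the extreme points does not push the expected fraction of retained points above a fixed constant $c<1$ --- intuitively the extreme points drive the dividing segment toward the right-angle vertex, so the retained sub-triangle is, if anything, smaller --- and to check that the conditioning implicit in selecting the extreme points does not destroy the uniformity needed to re-invoke the lemma. Making this domination-and-uniformity argument rigorous is the delicate part; the reduction to $E[D]=O(\log n)$ and the Markov tail sum above are then routine.
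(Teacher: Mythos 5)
Your proposal follows the same overall route as the paper: Lemma \ref{bounding-box-n8} gives an expected $n/8$ points per corner, Lemma \ref{right-triangle-distribution} gives a factor-$\tfrac14$ decay per level, each level contributes two points to $A$, and there are four corners, so $E[|A|]=O(4\cdot 2\log_4 n)=O(\log n)$. Two differences are worth noting. First, the paper passes directly from ``the expected count shrinks by a factor of $4$ per level'' to ``the expected number of levels is $O(\log_4 m)$'' with no justification; this inference is not automatic, and your Markov-plus-tail-sum argument ($E[D]=\sum_{k\ge 0}P(D>k)$, with $P(D>k)\le\tfrac13 E[N_k]\le \tfrac{n}{3}c^k$ and a split at $k^\ast=\log_{1/c}n$) is exactly the missing step, so your write-up is strictly more careful here. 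Second, the obstacle you flag at the end is genuine and is not addressed by the paper either: Lemma \ref{right-triangle-distribution} is proved for cut points drawn independently and uniformly at random on the legs of a fixed right triangle, whereas the algorithm cuts along the segment joining the \emph{extreme} points of the current subset, whose distribution is not that of independent uniforms on fixed legs (indeed the legs of the triangle at the next level are themselves determined by those extreme points). The paper simply invokes the lemma as if it applied verbatim; the domination-and-uniformity argument you sketch would be needed to make either proof complete. In short, your proposal is at least as sound as the paper's own proof and correctly isolates its one real gap.
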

\begin{proof}
	From Lemma \ref{bounding-box-n8}, the expected number of points at each corner is $m = n/8$, and half of the points are eliminated in the first step of the algorithm. Lemma \ref{right-triangle-distribution} also states that at each recurrence of the algorithm, the expected number of points at each corner to enter the next recursion is $O(m_r/4)$, where $m_r$ is the number of remaining points inside a corner at the $r$th recurrence. Hence, the expected number of iterations of the algorithm for a corner is $O(\log_4 {m})$. For each corner, the algorithm takes and stores $2$ points in each recursion. Therefore, an upper bound on the total number of stored points in $A$ is  
	\begin{equation*}
	O(4 \times 2\log_4 m) = O(\log n).
	\end{equation*}
\end{proof}

In the following theorem, we prove that the expected running time of the algorithm is linear.

\begin{theorem}\label{running-time}
	The expected running time of the algorithm is $O(n)$.
\end{theorem}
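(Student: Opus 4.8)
The plan is to charge the entire running time to the number of points examined across all recursive calls, and then bound this quantity in expectation by exploiting the geometric decay guaranteed by Lemma \ref{right-triangle-distribution}. First I would pin down the cost of a single invocation. In both the top-level routine and in the \textsc{PointSelection} procedure, all the work — locating the four extreme points and deciding, for each point, on which side of the relevant segment it lies — is carried out in a constant number of linear scans. Hence a call on a point set $S_r$ runs in at most $c\,|S_r|$ steps for some absolute constant $c$, and the whole running time equals $c$ times the total number of point-examinations summed over every call the algorithm makes.

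Next I would set up the per-corner recursion. Fix one corner, say TR, and let $N_r$ denote the (random) number of points handled at recursion depth $r$ inside that corner, so $N_0$ is the number of points the first top-level step places in the corner. By Lemma \ref{right-triangle-distribution}, the points surviving into the next call are precisely those lying in the inner right triangle cut off by the two chosen extreme points, whose expected count is one quarter of the current count; that is, $E[N_{r+1}\mid N_r]=N_r/4$, and by the tower property $E[N_r]=E[N_0]\,4^{-r}$. The total work spent in this corner is $c\sum_{r\ge 0}N_r$, so by linearity of expectation (monotone convergence for this nonnegative series) its expectation is
\begin{equation*}
c\sum_{r\ge 0}E[N_r]=c\,E[N_0]\sum_{r\ge 0}4^{-r}=\tfrac{4}{3}\,c\,E[N_0]=O\bigl(E[N_0]\bigr).
\end{equation*}
I would then assemble the pieces: the top-level step costs $O(n)$ for its single scan, Lemma \ref{bounding-box-n8} gives $E[N_0]=n/8$ for each of the four corners, and summing the four corner costs with the top-level cost yields an expected running time of $O(n)+4\cdot O(n/8)=O(n)$.

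The step I expect to be the main obstacle is justifying that Lemma \ref{right-triangle-distribution} may be reapplied at every recursion depth, since this is what legitimizes the factor-$\tfrac14$ decay and hence the geometric summation above. The lemma presupposes that the points are uniformly distributed in the current triangle, whereas at depth $r$ the surviving points have been conditioned to lie in the (data-dependent) inner triangle determined by the previously chosen extreme points. To close this gap I would invoke the defining property of the uniform law: conditioning a uniform sample on the event that it falls in a measurable subregion leaves it uniform on that subregion, so the expected-area computation of Lemma \ref{right-triangle-distribution} carries over to each level. This is exactly the recursive use of the lemma already made in the proof of Theorem \ref{estimate}, and once it is granted, the geometric series collapses the per-corner cost to $O(E[N_0])$ and the result follows.
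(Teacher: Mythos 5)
Your proposal is correct and follows essentially the same route as the paper: a linear-cost top-level scan plus, in each of the four corners, per-call work proportional to the current subproblem size, which decays by an expected factor of $4$ per level (Lemma \ref{right-triangle-distribution}) and therefore sums geometrically to $O(E[N_0]) = O(n/8)$ per corner. The only difference is presentational --- you unroll the recurrence and sum the geometric series in expectation via the tower property (and explicitly justify re-applying the lemma through conditional uniformity), whereas the paper writes $T(m)=T(m/4)+\Theta(m)$ and invokes the Master Theorem; your version handles the randomness of the subproblem sizes somewhat more carefully.
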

\begin{proof}
	The running time of the algorithm involves the running time of the first step that is $O(n)$ plus the running time of each recursion. Let $m = \lceil n/8 \rceil$. The expected iterations of the algorithm is $O(\log_4 m)$. Line 15 of the algorithm that determines the position of the points with respect to the segment lines takes $\Theta(m)$. Finding extreme points $p_t, p_r, p_b, p_l$ also takes $\Theta(m)$. Thus the running time of each recursion  is
	$$
	T(m) = T(m/4) + \Theta(m),
	$$
	and given the Master Theorem, the solution to this recursive equation is $O(m)$. Therefore, the total running time is:
	$$
	T(n) = O(n) + O(m) = O(n) + O(4 \times n/8) = O(n).
	$$
\end{proof}

\section{Conclusion}
This paper proposed an efficient yet simple recursive algorithm as a preprocessing step to the convex hull problem. We assumed the points were randomly distributed in the plane by the uniform distribution. The algorithm eliminates all but $O(\log n)$ of points in $O(n)$, and to the best of our knowledge, it is an improvement to all previous similar works which retain $O(\sqrt{n})$ points. Then, the remaining points are fed to any existing convex hull algorithm.

\end{document}